\newcommand{\HCO}{HCO\xspace}
\newtheorem{theorem}{Theorem}
\newtheorem{lemma}{Lemma}
\newtheorem{corollary}{Corollary}
\title{
	Human Challenge Oracle: Designing AI-Resistant, Identity-Bound, Time-Limited Tasks for Sybil-Resistant Consensus
}
\author{
	Homayoun Maleki\\
	University of Deusto, DeustoTech\\
	Bilbao, Spain\\
	\texttt{h.maleki@deusto.es}
	\and
	Nekane Sainz\\
	University of Deusto, DeustoTech\\
	Bilbao, Spain\\
	\texttt{nekane.sainz@deusto.es}
	\and
	Jon Legarda\\
	University of Deusto, DeustoTech\\
	Bilbao, Spain\\
	\texttt{jlegarda@deusto.es}
}
\date{}
\begin{document}
		
		\maketitle
		
		\begin{abstract}
			Sybil attacks remain a fundamental obstacle in open online systems, where adversaries can cheaply create and sustain large numbers of fake identities. Existing defenses, including CAPTCHAs and one-time proof-of-personhood mechanisms, primarily address identity creation and provide limited protection against long-term, large-scale Sybil participation, especially as automated solvers and AI systems continue to improve.
			
			We introduce the \emph{Human Challenge Oracle} (\HCO), a new security primitive for continuous, rate-limited human verification. \HCO issues short, time-bound challenges that are cryptographically bound to individual identities and must be solved in real time. The core insight underlying \HCO is that  real-time human cognitive effort, such as perception, attention, and interactive reasoning, constitutes
			a scarce resource that is inherently difficult to parallelize or amortize across identities.
			
			We formalize the design goals and security properties of \HCO and show that, under mild and explicit assumptions, sustaining $s$ active identities incurs a cost that grows linearly with $s$ in every time window. This yields persistent Sybil resistance that composes over time, in contrast to resource-based mechanisms whose costs can be reused or amortized. We further describe broad classes of admissible challenges that satisfy the required real-time and non-parallelizability properties and present concrete instantiations that are practical in browser-based environments.
			
			Finally, we report an initial empirical study illustrating that such challenges are easily solvable by humans within seconds while remaining difficult for contemporary automated systems under strict time constraints. Together, these results position \HCO as a human-centric, continuously enforceable foundation for Sybil-resistant identity management in open online platforms.
		\end{abstract}

\section{Introduction}

Open online platforms, including social networks, forums, collaborative systems, and decentralized
applications, rely on the implicit assumption that each account corresponds to a distinct human
 participant. In practice, this assumption is routinely violated by Sybil attacks, in which adversaries create and operate large numbers of fake identities at low cost, enabling spam, misinformation, vote manipulation, and coordinated abuse \cite{douceur2002sybil}. Preventing such attacks remains a fundamental challenge for open systems.

Early defenses, most notably CAPTCHAs \cite{vonahn2003captcha}, focus on distinguishing humans from automated agents at the time of account creation. While effective against basic automation, these mechanisms provide only one-time verification and offer limited protection against long-term Sybil participation. As modern AI systems increasingly succeed at perceptual and reasoning tasks, the security margin of traditional CAPTCHAs continues to erode \cite{geirhos2018generalisation,brendel2019approximating}.

More recent approaches, often referred to as Proof-of-Personhood mechanisms \cite{bowe2020proof,west2020proof}, attempt to enforce a ``one human, one identity'' principle through one-time ceremonies, biometric enrollment, or social vouching. Although these approaches raise the cost of identity creation, they do not provide ongoing guarantees: once identities are established, adversaries may sustain many of them using automation or human labor without incurring proportional recurring cost.

This paper introduces the \textbf{Human Challenge Oracle} (\HCO), a new security primitive for \emph{continuous}, \emph{rate-limited} human verification. Rather than certifying humanity only at registration time, \HCO\ repeatedly issues short, time-bound challenges that must be solved in real time and are cryptographically bound to individual identities. The central insight behind \HCO\ is that real-time human cognitive effort, such as perception,
attention, and interactive reasoning, constitutes a scarce resource that is inherently difficult
to parallelize or amortize across identities.

We formalize the design goals and security properties required of such an oracle and show that, under explicit and mild assumptions, \HCO\ enforces a linear cost on Sybil identities in every time window. In contrast to resource-based mechanisms whose costs can be reused or amortized, the resource underlying \HCO, real-time human effort, must be continuously expended. As a result, sustaining $s$ active identities requires resources that scale proportionally with $s$, and this cost composes over time.

Beyond the abstract model, we identify broad classes of admissible human challenges that satisfy the real-time and non-parallelizability requirements of \HCO\ and discuss concrete instantiations that are practical in browser-based environments. We complement the theoretical analysis with an initial empirical study illustrating that such challenges are easily solvable by humans within seconds while remaining difficult for contemporary automated systems under strict time constraints.

\paragraph{Contributions.}
This work makes the following contributions:
\begin{itemize}
	\item We introduce \HCO\ as a security primitive for continuous, rate-limited human verification in open systems.
	\item We formalize its design goals and security properties and prove that it enforces persistent linear Sybil cost under explicit assumptions.
	\item We characterize abstract classes of human challenges compatible with \HCO\ and describe practical instantiations.
	\item We present an initial empirical evaluation supporting the feasibility of these challenges under real-time constraints.
\end{itemize}

\section{Related Work}

Sybil attacks have long been recognized as a fundamental threat to open online systems, enabling adversaries to create and control large numbers of fake identities at low cost \cite{douceur2002sybil}. A wide range of defenses have been proposed, differing primarily in when and how they verify human participation and whether the associated costs can be amortized over time.

\paragraph{CAPTCHAs and one-time verification.}
The original CAPTCHA framework \cite{vonahn2003captcha} exploited perceptual gaps between humans and early automated systems to block bots at account creation. Subsequent mechanisms, including reCAPTCHA v2/v3 \cite{reCAPTCHA2019} and hCaptcha \cite{hCaptcha2020}, incorporate behavioral signals and machine learning to improve robustness. Despite these advances, such mechanisms remain fundamentally one-time checks. As modern vision-language models improve, even sophisticated CAPTCHAs increasingly offer limited long-term protection against sustained Sybil participation \cite{geirhos2018generalisation,brendel2019approximating}.

\paragraph{Proof-of-Personhood and identity-centric approaches.}
Proof-of-Personhood (PoP) systems aim to enforce a ``one human, one identity'' principle through one-time ceremonies, biometric enrollment, or social verification. Examples include Worldcoin \cite{worldcoin2023}, Gitcoin Passport \cite{gitcoin2023}, and BrightID \cite{brightid2022}. While effective at raising the cost of initial identity creation, these approaches do not provide continuous guarantees. Once identities are established, adversaries may sustain many of them using automation or coordinated human labor, without incurring proportional recurring cost \cite{west2020proof}.

\paragraph{Human--AI performance gaps under real-time constraints.}
Several lines of work document persistent gaps between human and machine performance in time-bound settings. Humans often outperform deep neural networks in perceptual generalization under noise and distribution shift \cite{geirhos2018generalisation}. Similarly, models that perform well in unconstrained settings can degrade significantly when strict time limits or interactive constraints are imposed \cite{brendel2019approximating}. Recent studies further suggest that short-horizon interactive reasoning and attention-based tasks remain challenging for current AI systems \cite{schuett2023human}. These results motivate the use of real-time constraints as a tool for limiting automation, though prior work has not connected such gaps to continuous, identity-level Sybil resistance.

\paragraph{Ongoing verification and human oracles.}
Some deployed systems employ ongoing verification mechanisms, such as behavioral biometrics in financial authentication \cite{biometrics2021}. However, these solutions are typically proprietary, closed, and not designed for open or adversarial environments. In decentralized settings, oracle networks such as Chainlink \cite{ellis2017chainlink} provide off-chain data to smart contracts via human or automated reporters. These oracles focus on data availability and correctness rather than on rate-limited, identity-bound verification of human presence.

\medskip
Table~\ref{tab:hco-taxonomy} summarizes representative approaches along three dimensions critical for Sybil resistance: persistence, per-human rate limiting, and resistance to automated solvers.

\begin{table*}[t]
	\centering
	\caption{Comparison of human verification mechanisms.}
	\label{tab:hco-taxonomy}
	\begin{tabular}{lccc}
		\toprule
		Mechanism & Persistence & Rate-Limited & AI Resistance \\
		\midrule
		Traditional CAPTCHAs \cite{vonahn2003captcha} & One-time & No & Low \\
		reCAPTCHA / hCaptcha \cite{reCAPTCHA2019,hCaptcha2020} & One-time & No & Medium \\
		Proof-of-Personhood systems \cite{worldcoin2023,brightid2022} & One-time & No & High (initially) \\
		Behavioral biometrics \cite{biometrics2021} & Ongoing & Yes & Medium \\
		Human oracles \cite{ellis2017chainlink} & Ongoing & Partial & Low \\
		\textbf{\HCO\ (this work)} & Ongoing & Yes & High (time-bound) \\
		\bottomrule
	\end{tabular}
\end{table*}

In contrast to prior work, \HCO\ is designed as a publicly verifiable, identity-bound oracle that enforces continuous human verification through real-time, non-amortizable effort. By combining persistence, strict per-human rate limiting, and explicit reliance on real-time human--AI performance gaps, \HCO\ addresses a regime of Sybil resistance that is not captured by existing one-time or reusable-cost mechanisms.

\section{HCO Design Goals and Properties}

The \emph{Human Challenge Oracle} (\HCO) is a security primitive for continuous, rate-limited verification of human participation in open online systems. Unlike mechanisms that certify humanity only at identity creation time, \HCO\ repeatedly issues short challenges that must be solved in real time and are cryptographically bound to individual identities. The purpose of this design is to ensure that sustaining many identities requires proportional ongoing human effort, rather than a one-time or reusable cost.

The construction of \HCO\ is motivated by the observation that certain forms of real-time human cognitive effort, such as perception, attention, and interactive reasoning, remain difficult
 to automate or parallelize under strict time constraints \cite{geirhos2018generalisation,brendel2019approximating}. Rather than relying on any specific task, \HCO\ abstracts these gaps into explicit design goals and formal properties.

\subsection{Design Goals}

\HCO\ is designed to satisfy the following goals:
\begin{itemize}
	\item \textbf{Continuous verification.}
	Human presence should be verified repeatedly over time, rather than only at account creation.
	
	\item \textbf{Per-human rate limiting.}
	The system should enforce a bound on the number of challenges that a single human can solve within any fixed time window.
	
	\item \textbf{Real-time difficulty asymmetry.}
	Challenges should be solvable by honest humans within seconds, while remaining difficult for automated solvers under the same strict time constraints.
	
	\item \textbf{Practical verifiability.}
	Solutions should be efficiently and publicly verifiable without requiring long-term storage of sensitive human data or specialized hardware.
\end{itemize}

These goals guide the abstraction of \HCO\ as an oracle and inform the security properties formalized below.

\subsection{Formal Model and Properties}

We model \HCO\ as an oracle
\[
\mathrm{HCO}(id, t, j),
\]
which, for an identity $id$, time window $t$, and challenge index $j$, outputs a fresh challenge $\chi_{id,t,j}$. Each challenge must be answered within a fixed response deadline $\Delta_{\mathrm{resp}}$.

The oracle is required to satisfy the following properties.

\begin{description}
	\item[(P1) Time-bound human advantage.]
	There exists a response window $\Delta_{\mathrm{resp}}$ such that an honest human solves a challenge within $\Delta_{\mathrm{resp}}$ with probability at least $1-\epsilon_{\mathrm{hum}}$, while any feasible automated solver succeeds with probability at most $\epsilon_{\mathrm{auto}}$ within the same window.
	
	\item[(P2) Identity binding.]
	Each challenge and its solution are cryptographically bound to the tuple $(id, t, j)$, preventing reuse or transfer across identities or time windows.
	
	\item[(P3) Real-time constraint.]
	Responses submitted after $\Delta_{\mathrm{resp}}$ are rejected, ruling out precomputation, stockpiling, or batch-solving strategies.
	
	\item[(P4) Bounded human throughput.]
	A single human can solve at most $\tau_h = O(1)$ challenges per time window, due to inherent cognitive and temporal constraints.
\end{description}

\noindent
Together, these properties imply that if an adversary controls $s$ identities and has access to $m$ humans, then the number of identities that can be sustained in any window is at most $m \cdot \tau_h$. Consequently, scaling Sybil identities requires proportionally scaling real-time human effort.

\subsection{Separability from the Application Layer}

The security guarantees of \HCO\ depend only on Properties (P1)--(P4). Application-level details, such as user interfaces,
challenge presentation formats, or delivery channels, do not affect
 these guarantees. This separability allows \HCO\ to be instantiated across diverse systems, including social platforms, collaborative tools, and decentralized applications, without altering its core security properties.

\subsection{Assumptions and Long-Term Fragility}

Property (P1) relies on an empirical separation between human and automated performance under real-time constraints. We treat this separation as an explicit modeling assumption, analogous to computational hardness assumptions in cryptography \cite{bellare1993random}. For the security arguments to hold, it suffices that in each time window there exists at least one admissible challenge family for which automated success remains significantly lower than real-time human success. The implications of AI progress and the need for challenge rotation are discussed further in Section~\ref{sec:limitations}.

\section{System and Adversary Model}

We formalize the system model underlying the \HCO\ primitive and specify the capabilities and limitations of adversarial entities. This model serves as the foundation for the cost and security analysis developed in subsequent sections.

\subsection{System Model}

The system consists of a set of identities $\mathcal{I}$ interacting with an online platform that relies on \HCO\ for continuous human verification. Time is divided into discrete windows $t \in \mathbb{N}$ of fixed duration.

An identity $id \in \mathcal{I}$ is said to be \emph{active} in a time window $t$ if and only if it successfully completes at least one \HCO\ challenge during that window. Identities that fail to do so are considered inactive for the purposes of the system.

For each identity $id$ and time window $t$, the platform generates one or more challenges of the form
\[
\chi_{id,t,j} \leftarrow \mathrm{HCO}(id, t, j),
\]
where $j$ is a challenge index ensuring freshness. Each challenge must be answered within a strict response deadline $\Delta_{\mathrm{resp}}$.

A response is accepted as valid if and only if:
\begin{itemize}
	\item it is submitted within $\Delta_{\mathrm{resp}}$,
	\item it corresponds to a correct solution of $\chi_{id,t,j}$,
	\item it is cryptographically bound to the tuple $(id, t, j)$.
\end{itemize}

Verification is deterministic and public, and does not require long-term storage of raw human data.

\subsection{Human Model}

Humans are modeled as agents capable of solving \HCO\ challenges with high probability under real-time constraints. Each human is characterized by:
\begin{itemize}
	\item bounded cognitive throughput,
	\item bounded attention and reaction capacity,
	\item an inability to parallelize independent challenge-solving beyond constant factors.
\end{itemize}

We assume that a single human can solve at most $\tau_h = O(1)$ challenges per time window. This bound reflects inherent cognitive and temporal limitations rather than system-imposed restrictions.

\subsection{Adversary Model}

We consider a probabilistic polynomial-time adversary $\mathcal{A}$ attempting to sustain multiple identities in the system over time.

The adversary may control:
\begin{itemize}
	\item an arbitrary number of identities $s$,
	\item bounded computational resources,
	\item access to $m$ real humans (e.g., through incentives, coercion, or paid outsourcing).
\end{itemize}

The adversary may employ a range of strategies, including:
\begin{itemize}
	\item \textbf{Automation:} attempting to solve challenges using AI or algorithmic solvers;
	\item \textbf{Human outsourcing:} delegating challenges to hired human workers;
	\item \textbf{Relay attacks:} forwarding challenges to humans in real time;
	\item \textbf{Parallelization:} attempting to amortize human effort across multiple identities.
\end{itemize}

We allow the adversary to coordinate its resources optimally and to adapt its strategy across time windows.

\subsection{Adversarial Limitations}

The \HCO\ design induces the following fundamental constraints on the adversary:
\begin{itemize}
	\item \textbf{Real-time constraint.}
	Challenges expire after $\Delta_{\mathrm{resp}}$, ruling out precomputation and stockpiling.
	
	\item \textbf{Non-reusability.}
	Each valid solution is bound to a unique $(id, t, j)$ and cannot be reused across identities or time windows.
	
	\item \textbf{Bounded human throughput.}
	Each human can solve at most $\tau_h$ challenges per window.
	
	\item \textbf{Limited automation success.}
	Any feasible automated solver succeeds on a given challenge with probability at most $\epsilon_{\mathrm{auto}}$ within the response window.
\end{itemize}

These constraints hold regardless of the adversary’s computational power or coordination strategy.

\subsection{Security Objective}

The security objective of \HCO\ is to ensure that sustaining many active identities requires adversarial resources that scale proportionally with the number of identities.

Formally, \HCO\ aims to enforce the following property: for any adversary $\mathcal{A}$ with access to $m$ humans, the number of identities that can remain active in any given time window is upper bounded by $O(m)$. This objective underpins the linear Sybil cost lower bounds established in Section~\ref{sec:sybil-cost}.

\subsection{Discussion}

The system and adversary model intentionally abstracts away implementation details such as challenge format, user interface, or delivery channel. While these aspects affect usability and performance, they do not influence the fundamental cost-scaling guarantees provided by \HCO.

By making the modeling assumptions explicit, we enable rigorous reasoning about Sybil resistance while remaining agnostic to specific instantiations of human challenges.

\section{Sybil Cost Analysis}
\label{sec:sybil-cost}

We analyze the economic and structural cost imposed by the \HCO\ primitive on Sybil adversaries. Our objective is to characterize how the resources required to sustain multiple identities scale with adversarial capabilities, and to contrast this behavior with existing Sybil-resistance mechanisms.

\subsection{Model and Definitions}

Time is divided into discrete windows $t \in \mathbb{N}$. In each window, every identity must successfully solve at least one \HCO\ challenge in order to remain active. Each challenge is bound to a specific identity and time window and must be solved within a strict response deadline $\Delta_{\mathrm{resp}}$.

Let:
\begin{itemize}
	\item $s$ denote the number of identities controlled by an adversary,
	\item $m$ denote the number of real humans available to the adversary (e.g., through direct control or outsourcing),
	\item $\tau_h$ denote the maximum number of challenges a single human can solve per window, as induced by cognitive and temporal constraints (Property~P4),
	\item $C_A(s)$ denote the minimum \emph{per-window} cost required to sustain $s$ active identities.
\end{itemize}

We assume that automated solvers succeed on a given challenge with probability at most $\epsilon_{\mathrm{auto}}$ within the response window (Property~P1), and that solutions cannot be reused across identities or time windows (Property~P2).

\subsection{Linear Cost of Sybil Identities}

We first show that, under the \HCO\ model, sustaining additional identities incurs a cost that grows linearly with the number of identities.

\begin{lemma}[Per-Window Identity Bound]
	\label{lem:identity-bound}
	In any time window, an adversary with access to $m$ humans can produce valid \HCO\ solutions for at most $m \cdot \tau_h$ identities.
\end{lemma}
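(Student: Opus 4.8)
The plan is to prove the bound by a direct counting (pigeonhole) argument that maps each active identity injectively to a distinct valid solution, and then to bound the total supply of valid solutions available in a single window. I would begin by unpacking the definition of activity from the system model: an identity $id$ is active in window $t$ only if at least one response bound to some tuple $(id,t,j)$ is accepted, which by the acceptance criteria requires a \emph{correct} solution of a fresh challenge $\chi_{id,t,j}$ submitted within $\Delta_{\mathrm{resp}}$. Fix a window $t$, let $\mathcal{A}_t$ denote the set of identities the adversary keeps active, and for each $id \in \mathcal{A}_t$ select one witnessing valid solution $\sigma_{id}$.

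Next I would establish that the assignment $id \mapsto \sigma_{id}$ is injective. This is exactly where Property~(P2) enters: since every accepted solution is cryptographically bound to the tuple $(id,t,j)$, a solution witnessing activity for $id_1$ cannot simultaneously be valid for any $id_2 \neq id_1$, and it cannot be transferred or reused across identities. Hence distinct active identities are witnessed by distinct valid solutions, so $|\mathcal{A}_t|$ is at most the total number of valid solutions produced during window $t$.

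It then remains to bound that total supply. Property~(P3) rules out precomputation, stockpiling, and batch-solving, so every valid solution counted above must be generated in real time \emph{within} window $t$; no solution carried over from another window or prepared in advance can contribute. Solutions produced in real time originate from only two sources. Human effort contributes at most $m \cdot \tau_h$ valid solutions, because the adversary has access to $m$ humans and, by Property~(P4), each solves at most $\tau_h$ challenges per window; this already subsumes the human-outsourcing and relay strategies, since those merely route challenges to the same pool of $m$ humans. Automated effort, by Property~(P1), succeeds on any individual challenge with probability at most $\epsilon_{\mathrm{auto}}$.

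The main obstacle is controlling the automated contribution rigorously, since a polynomial-time adversary may request polynomially many fresh challenges within the window and retry, so its \emph{expected} number of automated successes scales as the number of attempts times $\epsilon_{\mathrm{auto}}$. I would handle this by treating $\epsilon_{\mathrm{auto}}$ as negligible in the security parameter, consistent with the hardness-assumption framing of Property~(P1): over the polynomially many attempts available to a PPT adversary, the total automated contribution is then negligible, so with overwhelming probability no additional identity is sustained through automation. Combining the injectivity bound with the two source bounds yields $|\mathcal{A}_t| \le m \cdot \tau_h + \mathrm{negl}$, and in the idealized regime where automated solving fails outright ($\epsilon_{\mathrm{auto}} = 0$) this is exactly $m \cdot \tau_h$, as claimed.
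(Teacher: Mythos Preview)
Your proof is correct and follows essentially the same approach as the paper's sketch: use Property~(P2) to ensure each active identity requires a distinct valid solution, bound the human-generated supply by $m\cdot\tau_h$ via Property~(P4), and dismiss the automated contribution as negligible via Property~(P1). Your treatment is in fact more careful than the paper's, making the injectivity explicit, invoking (P3) to rule out carried-over solutions, and articulating why the automated contribution remains negligible even over polynomially many attempts; the paper simply asserts this last point without elaboration.
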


\begin{proof}[Proof sketch]
	By Property~P4, each human can solve at most $\tau_h$ challenges within a single window due to real-time and cognitive limitations. By Property~P2, each valid solution is bound to a unique identity and cannot be reused. Since automated solvers succeed with probability at most $\epsilon_{\mathrm{auto}}$, their contribution is negligible relative to human effort. Thus, the total number of identities that can be sustained in a window is bounded by the available human effort, yielding the stated bound.
\end{proof}

\begin{theorem}[Linear Sybil Cost]
	\label{thm:linear-cost}
	Under Properties~(P1)--(P4), sustaining $s$ active identities in a single time window requires access to $m = \Omega(s)$ humans. Equivalently, the adversarial cost function satisfies
	\[
	C_A(s) = \Omega(s).
	\]
\end{theorem}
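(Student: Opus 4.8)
The plan is to derive the theorem as a direct consequence of the per-window identity bound of Lemma~\ref{lem:identity-bound}, combined with the constant throughput bound $\tau_h = O(1)$ from Property~(P4) and a separate argument that the automated contribution cannot close the gap. First I would fix an arbitrary time window and an adversary controlling $s$ identities, each of which must produce at least one valid \HCO\ solution to remain active. By Property~(P2), every valid solution is cryptographically bound to a distinct tuple $(id,t,j)$, so solutions are non-fungible across identities; hence keeping all $s$ identities active requires $s$ distinct valid solutions within the window.

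Next I would partition the required solutions by their source. By Lemma~\ref{lem:identity-bound}, the $m$ available humans can generate at most $m\cdot\tau_h$ valid solutions in the window. Any remaining identity must be covered by an automated solver, which by Properties~(P1) and~(P3) succeeds on a given fresh challenge within $\Delta_{\mathrm{resp}}$ with probability at most $\epsilon_{\mathrm{auto}}$ and cannot stockpile or retry after expiry. Taking expectations, the number of identities automation can keep active is at most $\epsilon_{\mathrm{auto}}\,s$, so the expected number of active identities satisfies
\[
\mathbb{E}[\#\text{active}] \;\le\; m\,\tau_h + \epsilon_{\mathrm{auto}}\,s.
\]
For all $s$ identities to remain active I would require this quantity to be at least $s$, which rearranges to $m \ge (1-\epsilon_{\mathrm{auto}})\,s/\tau_h$. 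Since $\tau_h = O(1)$ and $\epsilon_{\mathrm{auto}}$ is assumed bounded away from $1$, this yields $m = \Omega(s)$.

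Finally, to convert the human requirement into the cost statement, I would invoke the modeling assumption that each participating human imposes a strictly positive per-window cost $c_h > 0$ (wage, incentive, or opportunity cost), independent of the identities involved. The minimum per-window cost then obeys $C_A(s) \ge c_h\,m \ge c_h(1-\epsilon_{\mathrm{auto}})\,s/\tau_h = \Omega(s)$, establishing both the $m = \Omega(s)$ and the $C_A(s) = \Omega(s)$ forms.

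I expect the main obstacle to be the automated-solver accounting rather than the counting itself. The subtlety is that $\epsilon_{\mathrm{auto}}$ is nonzero, so automation could in principle sustain a small fraction of identities at no human cost, and an adversary with a large computational budget might try to exploit this by retrying. The crux is to argue from Property~(P3) that each fresh per-identity challenge admits effectively a single graded attempt within $\Delta_{\mathrm{resp}}$, so automation's expected yield scales as $\epsilon_{\mathrm{auto}}\,s$ and no faster; the linear lower bound then survives precisely because $\epsilon_{\mathrm{auto}}$ is a constant strictly below $1$. A secondary point worth stating explicitly is the positivity of $c_h$, without which $m = \Omega(s)$ alone would not translate into a cost lower bound.
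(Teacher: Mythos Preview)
Your proposal is correct and follows essentially the same route as the paper: invoke Lemma~\ref{lem:identity-bound} to bound human-supported identities by $m\tau_h$, use $\tau_h=O(1)$ to conclude $m=\Omega(s)$, and convert to a cost bound via a strictly positive per-human cost. The only difference is cosmetic: you make the automated contribution explicit as an additive $\epsilon_{\mathrm{auto}}\,s$ term (yielding the constant $(1-\epsilon_{\mathrm{auto}})/\tau_h$), whereas the paper absorbs that term into Lemma~\ref{lem:identity-bound} by declaring automation's contribution ``negligible.''
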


\begin{proof}[Proof sketch]
	From Lemma~\ref{lem:identity-bound}, an adversary with $m$ humans can support at most $m \cdot \tau_h$ identities per window. To sustain $s$ identities, it must therefore hold that $m \cdot \tau_h \ge s$, implying $m \ge s / \tau_h = \Omega(s)$ since $\tau_h = O(1)$. Because each human incurs non-negligible economic cost (e.g., wages, coordination, or opportunity cost), the total adversarial cost grows linearly in $s$.
\end{proof}

\subsection{Impossibility of Sublinear Amortization}

A direct consequence of Theorem~\ref{thm:linear-cost} is that \HCO\ prevents sublinear amortization of human effort across identities.

Unlike Proof-of-Work systems, where specialized hardware can be reused and parallelized, or Proof-of-Stake systems, where capital can be split or delegated, \HCO\ relies on a resource, namely real-time human effort, that is neither reusable nor parallelizable beyond constant factors.

\begin{corollary}
	No adversary can sustain $s$ identities with $o(s)$ human effort per window under the \HCO\ model.
\end{corollary}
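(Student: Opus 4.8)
The plan is to obtain the corollary as the contrapositive of Theorem~\ref{thm:linear-cost}, so the argument is short and is essentially bookkeeping once the right quantity is fixed. First I would make precise what ``human effort per window'' means: I take it to be the number of distinct human-solved challenges in that window, which by Property~(P4) is at most $m \cdot \tau_h$ when the adversary has access to $m$ humans. The goal is then to show that if this quantity is $o(s)$, the adversary cannot keep all $s$ identities active, which is exactly the denial of sublinear amortization.

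The key step is an accounting of how the $s$ required valid solutions are produced in a single window. Since every active identity must solve at least one fresh challenge bound to its own $(id,t,j)$ tuple (Property~(P2)), the set of sustained identities partitions into those whose challenge was solved by a human and those solved by automation. The human-solved part is deterministically bounded by $m\tau_h = o(s)$ under the hypothesis. For the automation part, Property~(P1) caps the per-challenge success probability at $\epsilon_{\mathrm{auto}}$; applying linearity of expectation over the at most $s$ automated attempts together with a standard concentration bound, the number of identities sustainable by automation alone is at most $(\epsilon_{\mathrm{auto}} + o(1))\,s$ with high probability. Summing the two contributions, the total number of active identities is at most
\[
m\tau_h + (\epsilon_{\mathrm{auto}} + o(1))\,s = (\epsilon_{\mathrm{auto}} + o(1))\,s < s
\]
for all sufficiently large $s$, provided $\epsilon_{\mathrm{auto}}$ is bounded away from $1$. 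This contradicts the assumption that all $s$ identities remain active, forcing $m\tau_h = \Omega(s)$ and hence $\Omega(s)$ human effort.

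I expect the main obstacle to be the automated-contribution term rather than the human bound, which is immediate from Lemma~\ref{lem:identity-bound}. Two points need care. First, the concentration argument implicitly treats the automated outcomes across the $s$ challenges as (nearly) independent; if challenges within a window are correlated, I would fall back on a Markov-type bound on the expected number of automated successes, which still yields an $\epsilon_{\mathrm{auto}} s$ ceiling in expectation and suffices for the asymptotic claim. Second, the clean separation requires $\epsilon_{\mathrm{auto}}$ to be a constant strictly below $1$ (ideally negligible), so that a $(1-\epsilon_{\mathrm{auto}})$-fraction of identities genuinely demands human effort; if $\epsilon_{\mathrm{auto}} \to 1$ the real-time asymmetry collapses and the bound degenerates, which is precisely the long-term fragility flagged in Section~\ref{sec:limitations}. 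Once these are handled, the rearrangement $m \geq (1-\epsilon_{\mathrm{auto}})\,s/\tau_h = \Omega(s)$ closes the argument, and the corollary follows directly.
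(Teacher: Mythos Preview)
Your argument is sound and arrives at the corollary, but it proceeds along a different axis than the paper. The paper's proof is a two-line structural elimination: it observes that any sublinear amortization of human effort would have to come from either (i) beating the real-time deadline $\Delta_{\mathrm{resp}}$, which Property~(P3) forbids, or (ii) reusing a solution across identities or windows, which Property~(P2) forbids; automation is not accounted for separately but is folded into the ``negligible'' clause already absorbed in Lemma~\ref{lem:identity-bound}. You instead run a quantitative accounting: partition the $s$ required valid responses into human-produced and automation-produced, bound the first by $m\tau_h$ via (P4) and the second by roughly $\epsilon_{\mathrm{auto}}\,s$ via (P1) plus an expectation/Markov argument, and force a contradiction when the sum falls below $s$. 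Your route makes the dependence on $\epsilon_{\mathrm{auto}}<1$ explicit and surfaces exactly where the analysis degrades as the human--AI gap closes, which the paper leaves implicit; the paper's route is shorter and more conceptual, naming precisely the two design properties (P2, P3) that block amortization without any probabilistic bookkeeping. One small point to tighten in your version: the phrase ``at most $s$ automated attempts'' tacitly assumes a bounded number of challenge issuances per identity per window, which is really a consequence of (P3) and the fixed window length; stating that dependency would close the loop on why automation cannot simply retry its way to $s$ successes.
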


\begin{proof}[Proof sketch]
	Any attempt to amortize human effort across identities would require either (i) solving challenges faster than the real-time bound $\Delta_{\mathrm{resp}}$, violating Property~P3, or (ii) reusing solutions across identities or time windows, violating Property~P2. Both possibilities are excluded by the \HCO\ design.
\end{proof}

\subsection{Comparison with Existing Paradigms}

The linear cost property enforced by \HCO\ contrasts sharply with existing Sybil-resistance paradigms. In Proof-of-Work systems, computational resources can be reused indefinitely and parallelized, enabling sublinear marginal cost per identity. In Proof-of-Stake systems, economic capital can often be divided or delegated across multiple identities. In one-time Proof-of-Personhood mechanisms, the cost of identity creation is incurred once and does not scale with ongoing participation.

By contrast, \HCO\ enforces a \emph{continuous} linear cost by binding each active identity to fresh, real-time human effort in every window. This non-amortizable cost structure is central to its effectiveness against large-scale, long-lived Sybil attacks.

\section{Multi-Window and Long-Term Analysis}

We extend the Sybil cost analysis to systems operating over multiple time windows. Our goal is to show that the linear cost enforced by \HCO\ is not merely a per-window phenomenon, but persists over time and constrains long-term Sybil participation.

\subsection{Long-Term Participation Model}

Consider a system operating over an infinite sequence of discrete time windows $t = 1,2,\dots$. An identity is active in window $t$ if and only if it successfully completes at least one \HCO\ challenge during that window.

Let:
\begin{itemize}
	\item $S_t$ denote the set of identities active in window $t$,
	\item $s_t = |S_t|$ denote the number of active identities,
	\item $m_t$ denote the number of humans available to the adversary in window $t$.
\end{itemize}

We allow the adversary to adaptively vary $m_t$ over time, capturing churn in human availability, outsourcing dynamics, and strategic reallocation of effort.

\subsection{Per-Window Persistence Constraint}

The following lemma shows that the per-window identity bound established earlier composes over time.

\begin{lemma}[Window Persistence Bound]
	\label{lem:persistence}
	In any window $t$, an adversary with access to $m_t$ humans can sustain at most $m_t \cdot \tau_h$ active identities:
	\[
	s_t \le m_t \cdot \tau_h .
	\]
\end{lemma}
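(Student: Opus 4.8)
The plan is to show that the Window Persistence Bound (Lemma~\ref{lem:persistence}) follows almost immediately by specializing the already-established Per-Window Identity Bound (Lemma~\ref{lem:identity-bound}) to an arbitrary fixed window $t$. The key observation is that Lemma~\ref{lem:identity-bound} is a purely per-window statement: it asserts that with $m$ humans an adversary can produce valid solutions for at most $m \cdot \tau_h$ identities \emph{in any single window}, and its proof invokes only Properties (P1)--(P4), which hold identically in every window by the freshness and identity-binding conditions (the challenge index $j$ and tuple $(id,t,j)$ ensure no cross-window carryover). Since nothing in that argument depends on $t$ being the first window or on the history of prior windows, it applies verbatim to each $t$ with the window-specific human count $m_t$.

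Concretely, I would proceed in three short steps. First, fix an arbitrary window $t$ and note that the adversary has exactly $m_t$ humans available in that window. Second, apply Lemma~\ref{lem:identity-bound} with $m := m_t$ to conclude that the number of identities for which valid \HCO\ solutions can be produced in window $t$ is at most $m_t \cdot \tau_h$. Third, recall the activity definition: an identity is active in window $t$ if and only if it successfully completes at least one challenge in that window (equivalently, a valid solution bound to $(id,t,j)$ is produced). Hence $s_t = |S_t|$, the number of active identities, cannot exceed the number of identities with a valid solution, giving $s_t \le m_t \cdot \tau_h$.

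The only subtlety worth flagging explicitly is the \emph{memorylessness across windows}: one must confirm that human effort or solutions expended in windows $t' < t$ cannot be amortized into window $t$ to inflate the count beyond $m_t \cdot \tau_h$. This is exactly what Properties (P2) and (P3) guarantee: solutions are bound to $(id,t,j)$ and are rejected after $\Delta_{\mathrm{resp}}$, so no stockpiling or reuse across windows is possible, and the throughput bound $\tau_h$ resets each window. I would therefore not expect any genuine obstacle here; the main task is merely to state the reduction cleanly and point to the non-reusability properties so that the per-window bound is seen to hold independently in each window, setting up the subsequent summation over windows for the long-term analysis.
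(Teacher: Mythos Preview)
Your proposal is correct and matches the paper's own argument essentially verbatim: the paper's proof sketch likewise applies Lemma~\ref{lem:identity-bound} with $m = m_t$ to the fixed window $t$ and invokes freshness/non-reusability to rule out carrying effort across windows. Your additional remark linking the activity definition to ``at least one valid solution'' is a clean touch but is implicit in the paper's version.
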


\begin{proof}[Proof sketch]
	The claim follows directly from Lemma~\ref{lem:identity-bound} applied independently to each time window. Because challenges are fresh and non-reusable across windows, human effort expended in window $t$ cannot be carried forward to window $t+1$.
\end{proof}

\subsection{Steady-State Sybil Capacity}

We now characterize the adversary’s long-term capacity to sustain identities.

\begin{theorem}[Steady-State Sybil Bound]
	\label{thm:steady-state}
	Let
	\[
	\bar{m} = \limsup_{T \to \infty} \frac{1}{T} \sum_{t=1}^{T} m_t
	\]
	denote the adversary’s average human availability. Then,
	\[
	\limsup_{T \to \infty} \frac{1}{T} \sum_{t=1}^{T} s_t
	\;\le\; \bar{m} \cdot \tau_h .
	\]
\end{theorem}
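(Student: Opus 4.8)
The plan is to reduce the long-run average bound to the per-window bound supplied by Lemma~\ref{lem:persistence} and then push the inequality through the $\limsup$ using only elementary properties of superior limits. First I would fix an arbitrary horizon $T$ and invoke the window persistence bound $s_t \le m_t \cdot \tau_h$, which holds for every individual window $t$. Summing this inequality over $t = 1, \dots, T$ and using that $\tau_h$ is a fixed nonnegative constant gives the finite-horizon estimate
\[
\sum_{t=1}^{T} s_t \;\le\; \tau_h \sum_{t=1}^{T} m_t .
\]
Dividing both sides by $T$ then yields $\frac{1}{T}\sum_{t=1}^{T} s_t \le \tau_h \cdot \frac{1}{T}\sum_{t=1}^{T} m_t$ for each $T$, so the claim is reduced to a limiting statement about these two running averages.

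The second step is to take the superior limit as $T \to \infty$ on both sides. Here I would rely on two standard facts: that $\limsup$ is monotone with respect to pointwise inequality of sequences, so the term-by-term bound above is preserved in the limit; and that $\limsup$ commutes with multiplication by the positive constant $\tau_h$, i.e.\ $\limsup_{T}(\tau_h\, a_T) = \tau_h \limsup_{T} a_T$. Applying both facts gives
\[
\limsup_{T\to\infty} \frac{1}{T}\sum_{t=1}^{T} s_t
\;\le\; \tau_h \cdot \limsup_{T\to\infty} \frac{1}{T}\sum_{t=1}^{T} m_t
\;=\; \tau_h \cdot \bar{m},
\]
which is exactly the claimed steady-state bound once we recall the definition of $\bar{m}$.

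The argument is essentially a one-line consequence of Lemma~\ref{lem:persistence}, so I do not expect a deep obstacle; the only point requiring genuine care is the manipulation of the $\limsup$. In particular, I would emphasize that the running averages $\frac{1}{T}\sum s_t$ and $\frac{1}{T}\sum m_t$ need not converge---the adversary is explicitly permitted to vary $m_t$ adaptively across windows, capturing churn and strategic reallocation---which is precisely why the theorem is phrased with $\limsup$ rather than $\lim$, and why only monotonicity and positive scaling (not linearity of limits) may be used. I would also note the degenerate edge cases for completeness: the scaling identity uses $\tau_h > 0$, while $\tau_h = 0$ forces $s_t = 0$ and makes the bound trivial, and no finiteness assumption on $\bar{m}$ is needed since the inequality is vacuous when $\bar{m} = \infty$.
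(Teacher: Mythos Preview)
Your proposal is correct and follows essentially the same route as the paper's own proof: apply Lemma~\ref{lem:persistence} to get $s_t \le m_t \tau_h$ for each window, average over $t = 1,\dots,T$, and pass to the $\limsup$. Your treatment is in fact slightly more careful than the paper's sketch, since you make explicit the monotonicity and positive-scaling properties of $\limsup$ that justify the final step and you address the edge cases $\tau_h = 0$ and $\bar m = \infty$.
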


\begin{proof}[Proof sketch]
	By Lemma~\ref{lem:persistence}, we have $s_t \le m_t \cdot \tau_h$ for all $t$. Summing over $T$ windows and dividing by $T$ yields
	\[
	\frac{1}{T} \sum_{t=1}^{T} s_t
	\le
	\tau_h \cdot \frac{1}{T} \sum_{t=1}^{T} m_t.
	\]
	Taking the limit superior as $T \to \infty$ completes the proof.
\end{proof}

Theorem~\ref{thm:steady-state} shows that Sybil capacity under \HCO\ is fundamentally constrained by sustained human effort. Temporary increases in human availability cannot be amortized across time, as each window requires fresh real-time interaction.

\subsection{Identity Churn and Burst Attacks}

An adversary may attempt to concentrate effort in short bursts, activating many identities for a limited number of windows.

Let $B_T = \sum_{t=1}^{T} s_t$ denote the total number of identity-windows sustained over a horizon of $T$ windows. From Lemma~\ref{lem:persistence}, we obtain
\[
B_T \le \tau_h \sum_{t=1}^{T} m_t .
\]

Thus, even burst-style attacks incur linear cost in total human effort. While an adversary may temporarily increase $s_t$ by allocating additional humans, the cumulative cost scales proportionally with the duration and intensity of the attack.

\subsection{Comparison with One-Time Verification}

The multi-window analysis highlights a fundamental distinction between \HCO\ and one-time verification mechanisms. In systems that verify humanity only at registration, the cost of identity creation is incurred once and can be amortized indefinitely.

By contrast, \HCO\ requires identities to continuously ``pay rent'' in the form of real-time human effort. Dormant or stockpiled identities therefore cannot be maintained without ongoing cost, sharply limiting long-term Sybil influence.

\subsection{Implications}

Together, these results show that \HCO\ enforces not only instantaneous Sybil resistance but also persistent resistance over time. Any adversary seeking sustained influence must commit proportional human resources indefinitely, rendering large-scale, long-lived Sybil attacks economically and operationally infeasible in steady state.

\section{Abstract Challenge Classes}

We abstract the notion of an \HCO\ challenge beyond specific task implementations. Rather than relying on particular puzzles, interfaces, or modalities, we identify general classes of challenges that satisfy the core properties required for Sybil resistance. This abstraction ensures that the security guarantees of \HCO\ do not depend on any single task design and can adapt to future changes in technology and usability requirements.

\subsection{Challenge Interface}

An \HCO\ challenge is modeled as an interactive protocol between the system and an identity $id$ during a time window $t$. Each challenge instance consists of:
\begin{itemize}
	\item a prompt $\pi_{id,t,j}$ generated at challenge time,
	\item a response space $\mathcal{R}$,
	\item a deterministic verification function $\mathsf{Verify}(\pi, r) \in \{0,1\}$.
\end{itemize}

A response $r \in \mathcal{R}$ is accepted as valid if and only if it is submitted within the response deadline $\Delta_{\mathrm{resp}}$ and satisfies
\[
\mathsf{Verify}(\pi_{id,t,j}, r) = 1.
\]

\subsection{Required Properties of Challenge Classes}

A challenge class $\mathcal{C}$ is said to be \emph{admissible} for \HCO\ if it satisfies the following properties.

\paragraph{Freshness.}
Each challenge instance must be unpredictable prior to issuance. Formally, no adversary can produce a valid response for $\pi_{id,t,j}$ with non-negligible probability before the challenge is revealed.

\paragraph{Real-Time Solvability.}
An honest human can solve challenges drawn from $\mathcal{C}$ within $\Delta_{\mathrm{resp}}$ with probability at least $1-\epsilon_{\mathrm{hum}}$, while any feasible automated solver succeeds with probability at most $\epsilon_{\mathrm{auto}}$ within the same time bound.

\paragraph{Non-Parallelizability.}
Solving multiple challenges from $\mathcal{C}$ does not admit superlinear speedups. In particular, solving $k$ independent challenges requires $\Omega(k)$ real-time effort from any single solver.

\paragraph{Identity Binding.}
Each challenge instance is cryptographically bound to the tuple $(id, t, j)$, preventing reuse of solutions across identities or time windows.

\subsection{Admissible Challenge Families}

Under the above abstraction, a wide range of concrete challenge families may serve as valid \HCO\ instantiations. We highlight several representative classes without committing to specific implementations.

\paragraph{Perceptual Alignment Tasks.}
Challenges that require rapid human perceptual alignment under distortion, noise, or partial information. These tasks exploit human generalization abilities under time pressure and are difficult to parallelize or precompute.

\paragraph{Interactive Reasoning Tasks.}
Challenges involving short sequences of dependent reasoning steps, where intermediate state must be maintained and resolved within a strict real-time deadline.

\paragraph{Biometric-Light Response Tasks.}
Challenges that require a real-time human signal (e.g., voice or motion) in response to a fresh prompt, without relying on long-term biometric storage or enrollment.

\paragraph{Attention-Based Interaction Tasks.}
Challenges that require continuous attention and real-time interaction, such as tracking, selection, or coordination tasks with dynamic elements.

Each of these families can be instantiated to satisfy freshness, real-time solvability, non-parallelizability, and identity binding under current technological constraints.

\subsection{Composability and Rotation}

The \HCO\ framework does not rely on any single challenge family indefinitely. Instead, systems may rotate among multiple admissible classes or combine them within a single time window. Since the security analysis depends only on the abstract properties defined above, such rotation preserves the linear Sybil cost guarantees established in Sections~\ref{sec:sybil-cost} and~\ref{thm:steady-state}.

Formally, as long as at least one admissible challenge class is deployed in each time window, the per-window and long-term Sybil cost bounds continue to hold.

\subsection{Discussion}

By separating abstract challenge properties from concrete implementations, \HCO\ avoids dependence on specific tasks, modalities, or user interfaces. This abstraction enables principled reasoning about Sybil resistance while allowing systems to adapt challenge designs in response to usability constraints, accessibility requirements, and advances in automated solving capabilities.

\section{Comparison with Resource-Based Consensus Models}
\label{sec:resource-comparison}

Sybil-resistance mechanisms can be understood through the nature of the underlying
resource they impose on participants. In this section, we compare the \HCO\ primitive
with Proof-of-Work (PoW) and Proof-of-Stake (PoS) by formalizing their respective
resource models and analyzing how adversarial cost scales with the number of identities.

\subsection{Resource Abstraction}
\label{subsec:resource-abstraction}

We model a Sybil-resistance mechanism by a tuple
\[
(\mathcal{R}, \mathsf{Reuse}, \mathsf{Parallel}),
\]
where:
\begin{itemize}
	\item $\mathcal{R}$ denotes the scarce resource required to sustain an identity,
	\item $\mathsf{Reuse} \in \{0,1\}$ indicates whether the resource can be reused
	across identities or time windows,
	\item $\mathsf{Parallel}$ captures the extent to which the resource admits
	parallelization or amortization across identities.
\end{itemize}

The adversarial cost function $C_A(s)$ denotes the minimum cost required to sustain
$s$ identities in steady state under the corresponding mechanism.

\subsection{Proof-of-Work}
\label{subsec:pow}

In Proof-of-Work systems, the resource
$\mathcal{R}_{\mathrm{PoW}}$ is computational power, typically measured in hash
evaluations per second.

\paragraph{Reuse.}
PoW hardware can be reused indefinitely across time windows.

\paragraph{Parallelism.}
PoW admits near-perfect parallelism: adding hardware linearly increases effective
computational throughput.

\paragraph{Cost Scaling.}
An adversary controlling total hash power $H$ can distribute it across multiple
identities with negligible marginal cost per identity. Consequently, once sufficient
hardware is acquired,
\[
C_A^{\mathrm{PoW}}(s) = O(1),
\]
up to protocol-specific constraints, yielding sublinear or constant marginal Sybil cost.

\subsection{Proof-of-Stake}
\label{subsec:pos}

In Proof-of-Stake systems, the resource
$\mathcal{R}_{\mathrm{PoS}}$ is economic capital locked as stake.

\paragraph{Reuse.}
Stake can be reused across time windows and, in many designs, split or delegated
across identities.

\paragraph{Parallelism.}
Capital is divisible and reusable, enabling identity scaling without proportional
additional cost.

\paragraph{Cost Scaling.}
Let $K$ denote the adversary’s total stake. As long as $K$ exceeds a system-specific
threshold, sustaining additional identities does not require proportional additional
resources. Accordingly,
\[
C_A^{\mathrm{PoS}}(s) = O(1),
\]
subject to protocol-specific assumptions, resulting in weak identity-level Sybil
resistance.

\subsection{Human Challenge Oracle}
\label{subsec:hco-resource}

For \HCO, the resource $\mathcal{R}_{\mathrm{HCO}}$ is real-time human effort.

\paragraph{Reuse.}
Human effort cannot be reused across identities or time windows. Each identity
requires fresh interaction in every window.

\paragraph{Parallelism.}
Human attention is fundamentally non-parallelizable beyond constant factors due
to cognitive and temporal constraints.

\paragraph{Cost Scaling.}
From Theorems~\ref{thm:linear-cost} and~\ref{thm:steady-state}, the adversarial cost
under \HCO\ satisfies
\[
C_A^{\mathrm{HCO}}(s) = \Omega(s),
\]
both per window and in steady state.

\subsection{Summary of Resource Properties}
\label{subsec:resource-summary}

\begin{table}[t]
	\centering
	\caption{Comparison of Sybil-resistance resource models.}
	\label{tab:resource-comparison}
	\begin{tabular}{lccc}
		\toprule
		Mechanism & Resource & Reusable & Cost Scaling \\
		\midrule
		Proof-of-Work (PoW) & Computation & Yes & Sublinear / Constant \\
		Proof-of-Stake (PoS) & Capital & Yes & Sublinear / Constant \\
		\HCO\ (this work) & Human effort & No & Linear \\
		\bottomrule
	\end{tabular}
\end{table}

Table~\ref{tab:resource-comparison} summarizes the fundamental differences between
PoW, PoS, and \HCO. Notably, \HCO\ is the only mechanism whose underlying resource
is neither reusable nor parallelizable, yielding persistent linear Sybil resistance.

\subsection{Implications}
\label{subsec:resource-implications}

This comparison highlights a structural distinction between \HCO\ and existing
resource-based paradigms. PoW and PoS impose costs that can be reused or amortized
across identities, enabling adversaries to scale participation without proportional
ongoing cost. In contrast, \HCO\ enforces a non-reusable, non-parallelizable resource,
resulting in sustained linear Sybil cost over time.

Rather than replacing PoW or PoS, \HCO\ complements them by addressing identity-level
Sybil attacks that resource-based consensus mechanisms alone do not prevent.

\section{Concrete Challenge Families}

We present four representative challenge families that instantiate the abstract requirements of the \HCO\ framework. These challenges are designed to be solvable by humans within seconds under real-time constraints, while remaining difficult for automated systems to solve reliably within the same time bounds. All challenges are identity-bound, time-limited (on the order of seconds), and verifiable without requiring long-term storage of sensitive human data.

Importantly, these families are not intended as exhaustive or fixed designs, but as concrete examples demonstrating the feasibility of the abstract challenge properties introduced in Section~7.

\subsection{Perceptual Visual Matching}

This family exploits human visual perception under time pressure. A typical instance presents a distorted or partially corrupted image together with a small set of candidate matches, requiring the user to select the correct correspondence within a short deadline.

\textbf{Rationale.}
Humans exhibit strong perceptual generalization capabilities under noise and distortion, even with limited viewing time, whereas automated systems often require additional processing or multiple inference passes to achieve comparable accuracy under tight time constraints \cite{geirhos2018generalisation}.

\textbf{Instantiation.}
Challenges can be generated by applying random visual transformations (e.g., noise, rotation, color perturbation) to an image and presenting a small set of distractors. Verification is deterministic and does not require storing the original image.

\subsection{Interactive Reasoning Tasks}

This family consists of short, interactive reasoning tasks that require maintaining intermediate state and producing a correct response within a strict time limit.

\textbf{Rationale.}
While humans can often solve simple reasoning problems rapidly and intuitively, automated solvers typically rely on multi-step inference procedures that are sensitive to latency and interaction constraints.

\textbf{Instantiation.}
Examples include lightweight numerical or logical comparisons with dynamically generated parameters and a visible countdown timer. Correctness is verified deterministically.

\subsection{Biometric-Light Response Tasks}

These challenges require a real-time human signal, such as speech or motion, in response to a fresh,
unpredictable prompt.

\textbf{Rationale.}
Producing a synchronized physical response to a dynamic prompt in real time remains difficult to automate reliably without specialized hardware or detectable synthesis pipelines.

\textbf{Instantiation.}
A typical example presents a short random phrase that the user must read aloud within a few seconds. Verification can be performed via transient signal processing without long-term biometric enrollment.

\subsection{Attention-Based Interaction Tasks}

This family tests continuous human attention through real-time interaction with dynamic elements.

\textbf{Rationale.}
Tasks that require sustained attention and fine-grained interaction over several seconds are natural for humans but difficult to parallelize or emulate under real-time constraints.

\textbf{Instantiation.}
Examples include tracking or selecting moving targets within a bounded time interval, with verification based on interaction traces.

\medskip
Table~\ref{tab:challenge-results} provides indicative results from a small-scale empirical study, illustrating that these challenge families are easily solvable by humans within the allotted time while remaining challenging for contemporary automated systems under comparable constraints. A more detailed evaluation is presented in Section~10.

\begin{table*}[t]
	\centering
	\caption{Indicative performance of representative \HCO\ challenge families.}
	\label{tab:challenge-results}
	\begin{tabular}{lccc}
		\toprule
		Challenge Family & Human Success (\%) & Automated Success (\%) & Mean Completion Time (s) \\
		\midrule
		Perceptual visual matching & $\sim$90 & $<20$ & $\sim$6 \\
		Interactive reasoning & $\sim$85 & $<25$ & $\sim$12 \\
		Biometric-light response & $\sim$100 & $\approx 0$ & $\sim$8 \\
		Attention-based interaction & $\sim$95 & $\approx 0$ & $\sim$15 \\
		\bottomrule
	\end{tabular}
\end{table*}

These results should be interpreted as illustrative rather than definitive. Their purpose is to demonstrate the practical existence of challenge families satisfying the \HCO\ assumptions, rather than to establish tight performance bounds.

\section{Empirical Evaluation}

We conducted an initial empirical study to assess whether representative \HCO\ challenge families are practically solvable by humans within strict time bounds and to what extent automated systems are constrained under the same real-time conditions. The purpose of this evaluation is not to establish definitive performance benchmarks, but to provide supporting evidence for the modeling assumptions underlying Properties~(P1)--(P4).

\subsection{Methodology}

\paragraph{Human participants.}
We recruited 30 participants (ages 18--45, diverse backgrounds) through university mailing lists and online platforms. Each participant completed 20 challenges from each challenge family (80 challenges total) using a standard web browser on a personal laptop. Challenges were presented with a visible countdown timer (5--30 seconds), and participants were instructed to respond as quickly and accurately as possible. A trial was considered successful if the correct response was submitted within the allotted time.

\paragraph{Automated solvers.}
To approximate the behavior of contemporary automated systems under real-time constraints, we evaluated a set of publicly available vision--language models via their standard inference interfaces. Each solver was provided with the same challenge prompt and time limit information as human participants. For each challenge family, we conducted 100 trials per solver. A trial was counted as successful only if a correct response was produced within the simulated response window.

\paragraph{Experimental setup.}
All challenges were implemented in a browser-based environment. Human trials were conducted remotely with informed consent. Automated evaluations used standard inference APIs without fine-tuning or task-specific optimization. No human participant data was used to train or adapt automated solvers.

\subsection{Results}

Table~\ref{tab:evaluation-results} summarizes the observed performance across challenge families.

\begin{table}[t]
	\centering
	\caption{Indicative empirical results under real-time constraints.}
	\label{tab:evaluation-results}
	\resizebox{\linewidth}{!}{%
		\begin{tabular}{lcccc}
			\toprule
			Challenge Family
			& Human Success (\%)
			& Human Mean Time (s)
			& Automated Success (\%)
			& Auto. Time (s) \\
			\midrule
			Perceptual visual matching   & 92  & 6.2  & 12        & 18.4 \\
			Interactive reasoning       & 85  & 11.8 & 18        & 22.1 \\
			Biometric-light response    & 100 & 8.1  & $\approx 0$ & --   \\
			Attention-based interaction & 95  & 14.5 & $\approx 0$ & --   \\
			\bottomrule
		\end{tabular}%
	}
\end{table}

\paragraph{Human performance.}
Human participants consistently achieved high success rates across all challenge families, with mean completion times well below the imposed deadlines. Performance variability across participants was limited, indicating that the challenges are broadly accessible under the tested conditions.

\paragraph{Automated performance.}
Automated solvers exhibited substantially lower success rates under the same real-time constraints, particularly for tasks requiring interactive input or synchronized physical responses. In several challenge families, response latency alone accounted for a significant fraction of the observed failures.

\subsection{Discussion and Limitations}

The results suggest a pronounced gap between human and automated performance in real-time settings, supporting the plausibility of the assumptions underlying \HCO. Importantly, these findings should be interpreted with caution. Automated performance depends on current inference interfaces, latency characteristics, and interaction capabilities, all of which may evolve over time.

Our evaluation is limited in scale and scope. The number of participants is modest, the set of automated solvers is not exhaustive, and the experimental setup does not capture adversaries employing custom hardware, specialized pipelines, or hybrid human--AI strategies. Consequently, the results should be viewed as illustrative rather than definitive.

Nevertheless, the observed patterns provide empirical support for the feasibility of deploying \HCO\ challenge families that are easy for humans to solve in seconds while constraining automated solutions under strict real-time requirements. Larger-scale studies and longitudinal evaluations are left for future work.

\section{Security Analysis and Limitations}
\label{sec:limitations}

We analyze the security of the \HCO\ primitive against common attack vectors and discuss its practical limitations. Our analysis is conducted with respect to the explicit assumptions defined in Sections~3 and~4.

\subsection{Security Analysis}

\paragraph{Automated solving.}
Under Property~(P1), automated solvers are assumed to have a significantly lower probability of success than humans when constrained to strict real-time response windows. The empirical results in Section~10 provide supporting evidence for this assumption under current technological conditions. However, the security of \HCO\ does not rely on any specific automated system being incapable of solving challenges, but rather on the existence of at least one admissible challenge family in each time window for which automated success remains sufficiently low.

\paragraph{Human outsourcing and labor markets.}
An adversary may attempt to outsource challenge-solving to human workers or organized labor farms. This attack is explicitly captured in the adversary model. As shown in Sections~5 and~6, sustaining $s$ active identities requires proportional real-time human effort, resulting in linear cost scaling. While outsourcing is possible, it does not enable sublinear amortization and therefore does not circumvent the core security guarantees of \HCO.

\paragraph{Relay and real-time forwarding attacks.}
Challenges may be relayed to humans in real time. Such attacks are inherently bounded by response deadlines and per-human throughput limits. Since challenges are identity-bound and non-reusable, relaying does not reduce the total amount of human effort required.

\paragraph{Synthetic media and deepfake attacks.}
Some challenge families may involve biometric-light signals, such as speech or motion. Automated synthesis of such signals in real time could, in principle, be used to attack these challenges. However, \HCO\ does not rely on any single challenge modality. By rotating or combining challenge families with distinct interaction requirements, the system can mitigate reliance on modalities that become vulnerable to synthesis or imitation.

\paragraph{Replay and reuse attacks.}
Identity binding and freshness properties (Property~P2) ensure that challenge responses are specific to a single identity and time window. As a result, replay, stockpiling, or transfer of valid responses across identities or windows is ineffective.

\paragraph{Privacy considerations.}
\HCO\ challenges can be designed to avoid long-term storage of raw human data. Verification can be performed using ephemeral signals, hashes, or derived features, limiting exposure of sensitive information and reducing privacy risks.

\subsection{Limitations}

\paragraph{Dependence on human--AI performance gaps.}
The security of \HCO\ relies on an empirical separation between human and automated performance under real-time constraints. Advances in AI or specialized hardware could reduce or eliminate this gap for certain challenge families. As a result, \HCO\ requires periodic reassessment of challenge effectiveness and rotation of admissible challenge classes.

\paragraph{Accessibility and inclusivity.}
Some challenge families may be unsuitable for users with visual, auditory, or motor impairments. Practical deployments must provide alternative modalities and accessibility-aware challenge selection to avoid excluding legitimate users.

\paragraph{Operational and deployment costs.}
Although individual challenges can be implemented with low computational overhead, large-scale deployments require careful engineering to manage latency, reliability, and verification throughput. These costs are orthogonal to the security model but may affect practical adoption.

\paragraph{User experience and fatigue.}
Frequent challenges may impose cognitive burden or annoyance on users. Deployment policies must balance verification frequency against usability, for example by adjusting challenge rates based on risk signals or activity levels.

\paragraph{Adaptive adversaries.}
The analysis assumes adversaries operate within the modeled constraints. Highly adaptive adversaries employing hybrid human--AI strategies, custom hardware, or side-channel attacks may require additional defenses beyond the scope of this work.

\medskip
Overall, \HCO\ provides strong security guarantees under explicit and transparent assumptions. Its effectiveness in practice depends on careful challenge design, periodic adaptation, and integration with complementary system-level defenses.

\section{Discussion and Conclusion}

This paper introduced the \emph{Human Challenge Oracle} (\HCO), a security primitive for continuous, rate-limited verification of human participation in open online systems. In contrast to one-time verification mechanisms, \HCO\ enforces ongoing interaction by binding each active identity to fresh, real-time human effort.

Our analysis formalized the design goals and security properties required of such an oracle and showed that, under explicit and transparent assumptions, \HCO\ enforces linear Sybil cost both per time window and in steady state. By relying on a non-reusable and non-parallelizable resource,
namely real-time human attention, \HCO\ prevents the sublinear amortization strategies
 that undermine existing resource-based approaches. We further abstracted the notion of admissible challenges, demonstrating that the security guarantees of \HCO\ do not depend on any single task design and can be preserved through challenge rotation and adaptation.

Beyond the abstract model, we presented representative challenge families and an initial empirical study illustrating the practical feasibility of deploying such challenges under strict real-time constraints. These results serve to support the modeling assumptions rather than to establish definitive performance guarantees.

The \HCO\ framework is complementary to existing Sybil-resistance mechanisms, including Proof-of-Work, Proof-of-Stake, and one-time Proof-of-Personhood systems. It is particularly well-suited for settings in which identity-level Sybil attacks pose a persistent threat and where ongoing verification is required to maintain system integrity.

Looking forward, the effectiveness of \HCO\ depends on careful challenge design, accessibility-aware deployment, and periodic reassessment in response to advances in automated solving capabilities. More broadly, this work highlights the potential of real-time human effort as a first-class resource for security design and opens new directions for human-centric primitives in distributed and online systems.

	\bibliographystyle{unsrt}
	\bibliography{references}
	
\end{document}